\documentclass[11pt,a4paper]{article}
  \usepackage{authblk}
  \usepackage{cite}
  \usepackage{graphicx}
  \usepackage{latexsym, amsmath, amsthm, amssymb}
\usepackage[usenames,dvipsnames]{color}

\usepackage{latexsym, amsmath, amssymb}
\usepackage[usenames,dvipsnames]{color}
\usepackage{comment}

\theoremstyle{plain}
\newtheorem{theorem}{Theorem}
\newtheorem{proposition}[theorem]{Proposition}
\newtheorem{lemma}[theorem]{Lemma}
\newtheorem{corollary}[theorem]{Corollary}

\theoremstyle{definition}
\newtheorem{definition}{Definition}
\newtheorem*{remark}{Remark}

  \title{An approach to comparing Kolmogorov-Sinai and permutation entropy}

  \author[1,2]{Valentina A.~Unakafova\thanks{Corresponding address: Institute of Mathematics, University of L\"ubeck,
	     Ratzeburger Alley 160, Building 64, 23562 L\"ubeck, Germany. Tel.: +49 451 500 3165; fax: +49 451 500 3373. e-mail: unakafova@math.uni-luebeck.de (Valentina A.~Unakafova)}}
  \author[1,2]{Anton M.~Unakafov}
  \author[1]{Karsten Keller}
  \affil[1]{Institute of Mathematics, University of L\"ubeck}
  \affil[2]{Graduate School for Computing in Medicine and Life Sciences, University of L\"ubeck}
  \date{January 4, 2013}
  \begin{document}
  \maketitle

  \begin{abstract}
  \noindent 
  In this paper we discuss the relationship between permutation entropy and Kolmogorov-Sinai entropy in the one-dimensional case. 
  For this, we consider partitions of the state space of a dynamical system using ordinal patterns of order $(d+n-1)$ on the one hand, 
  and using $n$-letter words of ordinal patterns of order $d$ on the other hand. 
  The answer to the question of how different these partitions are provides an approach to comparing the entropies.
  \end{abstract}

\section{Introduction}\label{intro}

In this paper we discuss the relationship between the permutation entropy, introduced by Bandt and Pompe \cite{BandtPompe2002}, and
the well-known Kolmogorov-Sinai entropy (KS entropy).
A significant result in this direction, given by Bandt, Keller, and Pompe \cite{BandtKellerPompe2002}, is equality of both entropies for piecewise strictly monotone interval maps.
For many dynamical systems KS entropy has been shown to be not larger than permutation entropy \cite{KellerSinn2009,KellerSinn2010,Keller2011}.
Amig\'{o} et al. have proved equality of KS entropy and permutation entropy for a slightly different concept
of permutation entropy \cite{AmigoKennelKocarev2005,Amigo2012} (for a detailed discussion see \cite{Amigo2010}).

The representation of KS entropy on the basis of ordinal partitions given in \cite{KellerSinn2009,KellerSinn2010,Keller2011} allows to relate permutation entropy
and KS entropy. Roughly speaking, ordinal partitions classify the points of the state space according to the order types
(ordinal patterns) of their orbits.
The next step for better understanding the relationship of the entropies is to answer to
the question of how much more information ordinal patterns of order $(d+n-1)$ provide than
$n$ overlapping ordinal patterns of order $d$ \cite{KellerUnakafovUnakafova2012}.
Here we specialize the considerations in \cite{KellerUnakafovUnakafova2012} to the case of one-dimensional dynamical system.
At this level of research we do not have conclusive results, but we present some new ideas in this direction.

\subsection{Preliminaries}\label{preliminaries}

 Throughout the paper, $(\Omega,\mathbb{B}(\Omega),\mu,T)$ is a measure-preserving dynamical system,
 where $\Omega$ is an interval in $\mathbb{R}$,
 $\mathbb{B}(\Omega)$ is the Borel sigma-algebra on it,
 $\mu:\mathbb{B}(\Omega) \rightarrow [0,1]$ is a probability measure with $\mu(\{\omega\})=0$ for all $\omega\in\Omega$,
 and $T: \Omega \hookleftarrow$ is a $\mathbb{B}(\Omega)$-$\mathbb{B}(\Omega)$-measurable $\mu$-preserving transformation,
 i.e. $\mu(T^{-1}(B))=\mu(B)$ for all $B \in \mathbb{B}(\Omega)$.

 The {\it(Shannon) entropy} of a finite partition ${\cal P}=\{P_1,P_2,\ldots,P_l\} \subset \mathbb{B}(\Omega)$ of $\Omega$ with respect to $\mu$ is defined by
 \begin{equation*}
  H({\cal P})= - \sum_{P \in {\cal P}} \mu(P)\ln\mu(P)
 \end{equation*}
 (with $0\ln 0 := 0$).

  The alphabet $A=\{ 1,2, \ldots, l \}$ corresponding to a finite partition ${\cal P}=\{P_1,P_2,\ldots,\linebreak P_l\}$ provides
  words ${a_1 a_2 \ldots a_n}$ of given length $n$, and the set $A^n$ of all such words provides a partition
  ${\cal P}_n$ of $\Omega$ into the sets
  \begin{equation*}
	P_{a_1 a_2 \ldots a_n}=\{\omega \in P_{a_1}, T(\omega) \in P_{a_2}, \ldots, T^{\circ n-1}(\omega) \in P_{a_n}\}.
  \end{equation*}
  Here $T^{\circ t}$ denotes the $t$-th iterate of $T$.

  The {\it Kolmogorov-Sinai entropy (KS entropy)} and the {\it permutation entropy} of $T$ are defined by
  \begin{equation*}
       h_{\mu}(T) = \sup_{{\cal P}\ \text{finite partition }} \lim_{n \to \infty } \frac{H({\cal P}_n )}{n}
  \end{equation*}
  and
  \begin{equation*}
       h_{\mu}^{*}(T)=\varlimsup_{d \rightarrow \infty}\frac{H ({\cal P}(d)) }{d},
  \end{equation*}
  respectively, where ${\cal P}(d)$ is the ordinal partition we will consider in Section \ref{FromOrdinalPatternsToWords}.

  It was shown in \cite{KellerSinn2009,KellerSinn2010,Keller2011} that for many cases ordinal partitions characterize
  the KS entropy of $T$ in the following way:
  \begin{equation}\label{KS_representation}
    h_\mu(T)=\lim_{d \rightarrow \infty}\lim_{n \rightarrow \infty}\frac{H({\cal P}(d)_n)}{n}.
  \end{equation}
  (The partition ${\cal P}(d)_n$ given ${\cal P}(d)$ fits into the general definition of ${\cal P}_n$ given ${\cal P}$ as defined above.)

\subsection{Relationship between KS and permutation entropy}\label{KS_PE}

For the following discussion, recall the main result from \cite{KellerUnakafovUnakafova2012}.

\begin{theorem}
	The following statements are equivalent for $h_\mu(T)$ satisfying \eqref{KS_representation}:
	\begin{enumerate}
		\item [(i)]  $h_\mu(T) =h_{\mu}^\ast(T)$.
		\item [(ii)] For each $\varepsilon > 0$ there exists some $d_\varepsilon \in \mathbb{N}$ such that
			     for all $d \geq d_\varepsilon$ there is some $n_d \in \mathbb{N}$ with
			     \begin {equation}
				  H({\cal P}(d + n - 1)) - H({\cal P}(d)_n) < (n - 1)\varepsilon \text{ for all }n \geq n_d.
				  \label{MainInequality}
			     \end {equation}	
	\end{enumerate}
\end{theorem}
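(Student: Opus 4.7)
The plan is to exploit two basic observations. First, in the one-dimensional setting $\mathcal{P}(d+n-1)$ is a refinement of $\mathcal{P}(d)_n$, since one ordinal pattern of order $d+n-1$ determines the $n$ overlapping ordinal patterns of order $d$; hence $H(\mathcal{P}(d+n-1)) \geq H(\mathcal{P}(d)_n)$, so \eqref{MainInequality} really controls a nonnegative quantity. Second, for each fixed $d$ the limit $h_d := \lim_{n\to\infty} H(\mathcal{P}(d)_n)/n$ exists by the standard subadditivity argument for stationary processes, and by \eqref{KS_representation} we have $h_d \to h_\mu(T)$ as $d\to\infty$.

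For the direction $(ii)\Rightarrow(i)$ I would fix $\varepsilon>0$, apply \eqref{MainInequality}, and divide through by $m:=d+n-1$ to obtain
\[
\frac{H(\mathcal{P}(m))}{m} \leq \frac{n}{m}\cdot\frac{H(\mathcal{P}(d)_n)}{n} + \frac{n-1}{m}\,\varepsilon.
\]
Taking $\varlimsup$ in $n$ with $d\geq d_\varepsilon$ fixed, and using $n/m \to 1$, yields $h_\mu^*(T)\leq h_d + \varepsilon$; letting $d\to\infty$ and then $\varepsilon\to 0$ gives $h_\mu^*(T)\leq h_\mu(T)$. The reverse inequality comes for free from the refinement fact: $H(\mathcal{P}(m))/m \geq (n/m)\bigl(H(\mathcal{P}(d)_n)/n\bigr)$, so $\varliminf_{m\to\infty} H(\mathcal{P}(m))/m \geq h_d$ for every $d$, and hence $h_\mu^*(T)\geq h_\mu(T)$.

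For $(i)\Rightarrow(ii)$ I would first promote the $\varlimsup$ in the definition of $h_\mu^*(T)$ to a genuine limit: the $\varliminf$ bound just mentioned is unconditional, and together with $h_\mu^*(T)=h_\mu(T)$ it forces
\[
\lim_{m\to\infty}\frac{H(\mathcal{P}(m))}{m}=h_\mu(T)=:h.
\]
Given $\varepsilon>0$, I would pick $d_\varepsilon$ so that $h-h_d<\varepsilon/2$ for all $d\geq d_\varepsilon$, and then, for each such $d$, choose $n_d$ large enough that for every $n\geq n_d$ the two one-sided estimates
\[
\frac{H(\mathcal{P}(d)_n)}{n} > h_d-\frac{\varepsilon}{8},\qquad \frac{H(\mathcal{P}(d+n-1))}{d+n-1} < h+\frac{\varepsilon}{8}
\]
hold, and the additive $O(d)$ remainders are small compared to $n\varepsilon$. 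Multiplying out the two bounds gives
\[
H(\mathcal{P}(d+n-1))-H(\mathcal{P}(d)_n) < n(h-h_d)+(d-1)h+(d+n-1)\tfrac{\varepsilon}{8}+n\tfrac{\varepsilon}{8},
\]
and the first summand is $<n\varepsilon/2$ while the remaining three together are $<n\varepsilon/2$ once $n_d$ is sufficiently large, producing \eqref{MainInequality}.

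The main obstacle is the triple bookkeeping in $(i)\Rightarrow(ii)$: the choice of $d_\varepsilon$ and $n_d$ has to reconcile three asymptotic regimes simultaneously, namely $h_d\to h$, $H(\mathcal{P}(d)_n)/n\to h_d$ for each fixed $d$, and $H(\mathcal{P}(m))/m\to h$ as $m\to\infty$. The key preliminary step is the promotion of the $\varlimsup$ to a true limit, which is possible only because the refinement relation forces $\varliminf_{m\to\infty}H(\mathcal{P}(m))/m\geq h_\mu(T)$; without this, the $\varlimsup$ in $h_\mu^*(T)$ would leave room for infinitely many $m$ on which \eqref{MainInequality} could fail, and no uniform $n_d$ could be chosen.
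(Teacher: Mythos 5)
The paper never proves this theorem: it is explicitly imported as ``the main result from \cite{KellerUnakafovUnakafova2012}'', so there is no in-paper argument to compare yours against. Judged on its own, your proof is correct and essentially complete. Both pillars are sound in this setting: ${\cal P}(d+n-1)$ does refine ${\cal P}(d)_n$ (each length-$(d+1)$ window is a sub-window of the full one; this is also what Proposition \ref{Proposition1} plus the telescoping in \eqref{UpperBound} delivers), and $h_d=\lim_n H({\cal P}(d)_n)/n$ exists by subadditivity of $n\mapsto H({\cal P}(d)_n)$ under $T$-invariance of $\mu$; moreover Fekete's lemma even gives $H({\cal P}(d)_n)/n\geq h_d$ for every $n$, so your lower one-sided estimate needs no $\varepsilon/8$ at all. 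Two small points deserve attention. First, the bookkeeping in (i)$\Rightarrow$(ii): your two ``$<n\varepsilon/2$'' estimates add up to $n\varepsilon$, whereas \eqref{MainInequality} demands $(n-1)\varepsilon$; this is harmless (require $h-h_d<\varepsilon/4$ when choosing $d_\varepsilon$, or note that $(d-1)h+(d+n-1)\tfrac{\varepsilon}{8}+n\tfrac{\varepsilon}{8}<\tfrac{n\varepsilon}{2}-\varepsilon$ for $n$ large), but as written the last sentence slightly overstates what the two bounds give. Second, your closing justification for promoting the $\varlimsup$ to a limit is backwards: failure of \eqref{MainInequality} would require $H({\cal P}(m))$ to be too \emph{large} infinitely often, and $\varlimsup_m H({\cal P}(m))/m=h$ already excludes that for all large $m$; it is downward excursions that the $\varliminf$ bound rules out, and those only help the inequality. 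So the promotion is true (and tidy) but not actually needed for (i)$\Rightarrow$(ii). Neither issue affects the validity of the argument.
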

The purpose of the following discussion is to compare the partitions ${\cal P}( d + n - 1 )$ and ${\cal P}( d )_n$ and to answer the question
under what assumptions (ii) in Theorem \ref{Theorem1} holds and, more generally,
in what extent these partitions differ with increasing $d$ and $n$.

Let us define $V_d \subset \Omega$ as
\begin{align}\label{BadSet}
     \nonumber
	 V_d\ =\ &\{\omega \mid \omega < T^{\circ d}(\omega),\, T^{\circ l}(\omega) \notin (\omega,T^{\circ d}(\omega))
	 \text{ for all } l = 1, \ldots, d-1 \}\\	
	 \cup\ &\{\omega \mid \omega \geq T^{\circ d}(\omega),\, T^{\circ l}(\omega) \notin [T^{\circ d}(\omega),\omega]
	 \text{ for all } l = 1, \ldots, d-1 \}.
\end{align}
The sets $V_{d+1},\ldots, V_{d+n-1}$, more closely considered in Section \ref{FromOrdinalPatternsToWords},
allow to describe all elements of the partition ${\cal P}(d+n-1)$, which are proper subsets of some elements of the partition ${\cal P}(d)_n$.
We are interested in showing that the sets $V_d$ are small in a certain sense.

Recall that $T$ is said to be {\it mixing} or {\it strong-mixing} if for every $A,B \in \mathbb{B}(\Omega)$
\begin{equation*}
 \lim_{n \rightarrow \infty}\mu(T^{-\circ n}A \cap B)=\mu(A)\mu(B).
\end{equation*}

\begin{theorem}\label{Theorem1}
  If $T$ is mixing, then for all $\varepsilon > 0$ there exists some $d_\varepsilon$ such that for all $d > d_\varepsilon$
  \begin{equation}
       \mu(V_d)<\varepsilon.
  \end{equation}
\end{theorem}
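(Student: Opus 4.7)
My plan is to split $V_d = V_d^+ \cup V_d^-$ according to whether $\omega < T^d\omega$ or $\omega \geq T^d\omega$, and to bound $V_d^+$; the piece $V_d^-$ follows by an entirely symmetric argument. The strategy is to partition $\Omega$ into intervals of small $\mu$-measure and classify each $\omega \in V_d^+$ by which interval contains $\omega$ and which contains $T^d\omega$. Concretely, given $\varepsilon > 0$, using the atomlessness of $\mu$, I partition $\Omega$ into consecutive intervals $I_1 < I_2 < \cdots < I_N$ with $\mu(I_j) = 1/N$, where $N$ is chosen large enough (depending on $\varepsilon$). For $\omega \in V_d^+$ with $\omega \in I_j$ and $T^d\omega \in I_k$, necessarily $k \geq j$, and I distinguish two cases: (a) $k \in \{j, j+1\}$ and (b) $k \geq j+2$.

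In case (a), $\omega$ and $T^d\omega$ lie in the same or in two adjacent intervals, so this case contributes at most $\sum_{j}\bigl[\mu(I_j \cap T^{-d}I_j) + \mu(I_j \cap T^{-d}I_{j+1})\bigr]$ to $\mu(V_d^+)$. Strong mixing implies that each summand converges to $\mu(I_j)\mu(I_k) = 1/N^2$ as $d \to \infty$; since only $2N$ pairs are involved, for $d$ sufficiently large the full sum is at most $3/N$, which is below $\varepsilon/2$ once $N > 6/\varepsilon$. In case (b), since $\omega < a_j$ and $T^d\omega \geq a_{j+1}$ (with $a_j, a_{j+1}$ denoting the partition endpoints), the interval $I_{j+1}$ lies entirely inside $(\omega, T^d\omega)$, so every iterate $T^l\omega$ for $l = 1, \ldots, d-1$ avoids $I_{j+1}$. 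Hence case (b) contributes at most $\sum_{j} \mu\bigl(I_j \cap \bigcap_{l=1}^{d-1} T^{-l}I_{j+1}^c\bigr)$. Since mixing implies ergodicity, $\mu\bigl(\bigcap_{l=1}^{\infty} T^{-l}I_{j+1}^c\bigr) = 0$ for each $j$, so each summand tends to $0$ as $d \to \infty$; with only $N$ summands, $d$ can be taken large enough to push each below $\varepsilon/(2N)$, making the total below $\varepsilon/2$. Combining, $\mu(V_d^+) < \varepsilon$ for all sufficiently large $d$, and analogously for $V_d^-$; a final rescaling of $\varepsilon$ yields the theorem.

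I expect the main obstacle to be case (b), where mixing appears only qualitatively through ergodicity, and the rate of convergence $\mu\bigl(\bigcap_{l=1}^{d-1} T^{-l}I_{j+1}^c\bigr) \to 0$ may depend on $j$. Finiteness of the partition saves the argument: with only $N$ values of $j$, one takes $d$ past the slowest of finitely many convergence rates. A minor bookkeeping issue at $j = N$, where $I_{N+1}$ is undefined, is resolved by noting that $\omega \in V_d^+ \cap I_N$ forces $T^d\omega \in I_N$, so only case (a) can arise at the right endpoint.
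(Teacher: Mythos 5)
Your proof is correct, and although it shares the paper's overall architecture --- partition $\Omega$ into finitely many intervals of equal measure, use strong mixing to control the points whose $d$-th iterate returns near where they started, and use ergodicity to kill the points whose orbit segment must avoid a set of positive measure --- it implements the second, harder step differently and more economically. The paper reduces the non-returning case to the set $\widetilde{V}_d(A_i)$ of points whose first $d-1$ iterates avoid all of $A_i$ on one side of $\omega$; since that avoided set depends on $\omega$, a separate lemma (Lemma \ref{Lemma1}) with an infinite dyadic sub-partition of $A_i$ is needed to convert it into avoidance of \emph{fixed} positive-measure sets before ergodicity can be applied. You sidestep this entirely by widening the mixing case to $T^{\circ d}(\omega)\in I_j\cup I_{j+1}$: whenever $T^{\circ d}(\omega)$ lands at least two intervals away, the whole fixed interval $I_{j+1}$ sits inside $(\omega,T^{\circ d}(\omega))$, so ergodicity applies directly to $I_{j+1}$ with no sub-partition, and the finiteness of the partition lets you take $d$ past the slowest of the $N$ convergence rates. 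The price is a factor of $2$ in the number of mixing pairs, which is harmless. Two small points to tidy up: the orbit is only forced to avoid the interior of $I_{j+1}$ (an iterate could sit at an endpoint of $(\omega,T^{\circ d}(\omega))$), which is harmless because $\mu$ gives the interior full measure $1/N$; and the ergodicity step deserves one line of justification, namely that $\bigcap_{l\geq 1}T^{-\circ l}(I_{j+1}^{c})$ is $\mu$-null because by the ergodic theorem almost every point visits the positive-measure set $I_{j+1}$ at some positive time.
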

Theorem \ref{Theorem2} provides a tool for comparing
``successive'' partitions ${\cal P}(d+1)_{n-1}$ and ${\cal P}( d )_n$.

\begin{theorem}\label{Theorem2}
  For all $n\in {\mathbb N}\setminus\{1\}$ and $d \in \mathbb{N}$ it holds
  \begin {equation}\label{IntermediateResult}
	H({\cal P}(d+1)_{n-1}) - H({\cal P}(d)_n) \leq \ln2(n-1)\mu(V_{d+1}).
  \end {equation}
\end{theorem}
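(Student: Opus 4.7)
The plan is to exploit that ${\cal P}(d+1)_{n-1}$ is a refinement of ${\cal P}(d)_n$ and then count, atom by atom, how much finer it is, using $V_{d+1}$ to quantify the ambiguity.

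First I would verify the refinement. The ordinal pattern of order $d+1$ at time $i$ encodes the relative ordering of the $d+2$ points $T^{\circ i}(\omega), T^{\circ(i+1)}(\omega), \ldots, T^{\circ(i+d+1)}(\omega)$; its restrictions to the first $d+1$ and to the last $d+1$ of these points are precisely the order-$d$ patterns at times $i$ and $i+1$. Hence the $n-1$ order-$(d+1)$ patterns at times $0, 1, \ldots, n-2$ jointly determine all $n$ order-$d$ patterns at times $0, 1, \ldots, n-1$, so ${\cal P}(d+1)_{n-1}$ refines ${\cal P}(d)_n$ and the left-hand side of \eqref{IntermediateResult} equals $H({\cal P}(d+1)_{n-1} \mid {\cal P}(d)_n)$.

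The central step is the local counting. Fix an atom $P \in {\cal P}(d)_n$. For each $i \in \{0, 1, \ldots, n-2\}$, the order-$(d+1)$ pattern at time $i$ is fully determined by the two order-$d$ patterns at times $i$ and $i+1$, \emph{except} for the single pairwise comparison between $T^{\circ i}(\omega)$ and $T^{\circ(i+d+1)}(\omega)$. This comparison is forced as soon as some intermediate iterate $T^{\circ(i+l)}(\omega)$ with $1 \leq l \leq d$ separates the two points, and is genuinely free otherwise---exactly the condition $T^{\circ i}(\omega) \in V_{d+1}$ from \eqref{BadSet}. Moreover, the relative positions of both $T^{\circ i}(\omega)$ and $T^{\circ(i+d+1)}(\omega)$ among the intermediate iterates can be read off from the two order-$d$ patterns, so whether position $i$ is ambiguous is a function of $P$ alone. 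Writing $k(P)$ for the number of ambiguous positions $i$, this shows that $P$ is split into at most $2^{k(P)}$ atoms of ${\cal P}(d+1)_{n-1}$. (The open-versus-closed-interval asymmetry in \eqref{BadSet} only matters on a $\mu$-null set, since $\mu$ is atomless.)

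The remainder is bookkeeping. Applying the elementary bound $H({\cal Q} \mid P) \leq \ln\#\{Q \in {\cal Q} : Q \subseteq P\}$ on each atom and summing gives
\begin{equation*}
  H({\cal P}(d+1)_{n-1} \mid {\cal P}(d)_n) \;\leq\; \ln 2 \sum_{P \in {\cal P}(d)_n} \mu(P)\,k(P).
\end{equation*}
Interchanging the two summations and invoking $T$-invariance of $\mu$ yields $\sum_P \mu(P) k(P) = \sum_{i=0}^{n-2} \mu(T^{-\circ i}(V_{d+1})) = (n-1)\mu(V_{d+1})$, which gives \eqref{IntermediateResult}. I expect the main obstacle to be the geometric identification of the ambiguous positions with orbit hits of $V_{d+1}$, together with the check that the ambiguity is a function of the atom $P$ rather than of the individual point $\omega$; once this is in place, the entropy and invariance manipulations are routine.
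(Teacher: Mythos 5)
Your proposal is correct and follows essentially the same route as the paper: identify for each atom $P \in {\cal P}(d)_n$ the number $k(P)$ of positions $l$ with $P \subset T^{-\circ l}(V_{d+1})$ (the paper's Proposition \ref{Proposition1}), bound the conditional entropy contribution of $P$ by $k(P)\ln 2$, and convert $\sum_P k(P)\mu(P)$ into $(n-1)\mu(V_{d+1})$ via $T$-invariance. Your explicit check that ambiguity of a position is a function of the atom rather than the point is exactly the content the paper packs into the phrase ``the $(n,d)$-word determines the same dynamics for all $\omega \in P$.''
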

Putting together Theorem \ref{Theorem1} and Theorem \ref{Theorem2}, one gets a more explicit variant of \eqref{IntermediateResult}:

\begin{corollary}\label{Corollary1}
  If $T$ is mixing, then for all $\varepsilon > 0$ there exists some $d_\varepsilon \in \mathbb{N}$ such that
  for all $d \geq d_\varepsilon,n \in \mathbb{N}\setminus\{1\}$ it holds
  \begin {equation*}
	H({\cal P}(d+1)_{n-1}) - H({\cal P}(d)_n) < (n-1)\varepsilon.
  \end {equation*}
\end{corollary}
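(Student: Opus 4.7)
The plan is simply to chain the two preceding results in the only way the hypotheses allow. Theorem \ref{Theorem2} already provides an unconditional, uniform-in-$n$ estimate
$$H({\cal P}(d+1)_{n-1}) - H({\cal P}(d)_n) \leq \ln 2 \cdot (n-1)\mu(V_{d+1}),$$
so to upgrade this to the corollary it is enough to drive $\mu(V_{d+1})$ below $\varepsilon/\ln 2$ by choosing $d$ large, which is precisely the content of Theorem \ref{Theorem1} under the mixing assumption on $T$.

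Concretely, given $\varepsilon>0$, I would set $\varepsilon':=\varepsilon/\ln 2>0$ and apply Theorem \ref{Theorem1} to $\varepsilon'$. This produces an index $d_{\varepsilon'}\in\mathbb{N}$ with $\mu(V_D)<\varepsilon'$ for every $D>d_{\varepsilon'}$. Defining $d_\varepsilon:=d_{\varepsilon'}$ (so that $d\geq d_\varepsilon$ forces $d+1>d_{\varepsilon'}$, properly handling the index shift from $d$ to $d+1$ built into the statement of Theorem \ref{Theorem2}), I obtain $\mu(V_{d+1})<\varepsilon/\ln 2$ for every $d\geq d_\varepsilon$.

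Substituting this back into the bound from Theorem \ref{Theorem2} yields
$$H({\cal P}(d+1)_{n-1}) - H({\cal P}(d)_n) \leq \ln 2 \cdot (n-1)\mu(V_{d+1}) < \ln 2 \cdot (n-1)\cdot\frac{\varepsilon}{\ln 2}=(n-1)\varepsilon$$
for all $d\geq d_\varepsilon$ and all $n\in\mathbb{N}\setminus\{1\}$, as required. I do not anticipate any genuine obstacle: the corollary is a direct bookkeeping consequence of Theorems \ref{Theorem1} and \ref{Theorem2}, and the only point requiring mild care is the factor $\ln 2$ (absorbed into the choice $\varepsilon'=\varepsilon/\ln 2$) together with the index offset between $V_{d+1}$ and the threshold $d_\varepsilon$.
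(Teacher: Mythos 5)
Your argument is exactly the paper's intended derivation: Corollary \ref{Corollary1} is obtained by feeding the bound $\mu(V_{d+1})<\varepsilon/\ln 2$ from Theorem \ref{Theorem1} (applied with $\varepsilon'=\varepsilon/\ln 2$) into the estimate of Theorem \ref{Theorem2}, and your handling of the index shift from $d$ to $d+1$ is correct. Nothing is missing.
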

Coming back to the partitions ${\cal P}( d + n - 1 )$ and ${\cal P}( d )_n$, in Section \ref{Partitions2}
we obtain the following upper bound for $H({\cal P}(d + n - 1)) - H({\cal P}(d)_n)$:
\begin{equation}
      H({\cal P}(d + n - 1)) - H({\cal P}(d)_n) \leq \ln 2 \sum^{n-1}_{i=1}(n-i)\mu(V_{d+i})
      \label{OverBound}
\end{equation}
(compare with \eqref{MainInequality}).

Being the main results of our paper, Theorem \ref{Theorem1}, Theorem \ref{Theorem2}, and Corollary \ref{Corollary1} shed some new light
on the general problem of equality between Kolmogorov-Sinai and permutation entropy in the one-dimensional case.

Section \ref{FromOrdinalPatternsToWords} gives the detailed description of $n$-letter words with ordinal patterns of order $d$ as letters,
of ordinal patterns themselves and their connection to the sets $V_d$.
In Section \ref{Partitions1} we focus on the partitions ${\cal P}(d+1)_{n-1}$ and ${\cal P}( d )_n$ and prove Theorem \ref{Theorem2}.
Section \ref{Partitions2} is devoted to the relation of the partitions ${\cal P}( d + n - 1 )$ and ${\cal P}( d )_n$ and provides \eqref{OverBound}.
Finally, we prove  Theorem \ref{Theorem1} in Section \ref{Proof}.

\section {From ordinal patterns to words}\label{FromOrdinalPatternsToWords}

Let us recall the definition of ordinal patterns.
\begin{definition}
    Let $\Pi_d$ be the set of permutations of the set $\lbrace 0, 1, 2, ..., d\rbrace$ for $d \in \mathbb{N}$. Then
    the real vector $(x_0, x_1, ..., x_d)\in {\mathbb R}^{d+1}$
    has {\it ordinal pattern} $\pi = (r_0, r_1,\ldots, r_d) \in \Pi_d$ of order $d$ if
    \begin {equation*}
	x_{r_0} \geq x_{r_1} \geq ... \geq x_{r_d}
    \end {equation*}
    and
    \begin {equation*}
	r_{l-1} > r_{l} \text{ in the case } x_{r_{l-1}} = x_{r_{l}}.
    \end {equation*}	
\end{definition}

We divide now the state space into sets of points having similar dynamics from the ordinal viewpoint.
\begin{definition}
  For $d \in \mathbb{N}$, the partition ${\cal P}(d) = \{P_{\pi} \mid \pi \in \Pi_d \} \text{ with }$
  \begin{equation*}
    P_{\pi} = \{ \omega \in \Omega \mid (T^{\circ d}(\omega), T^{\circ d-1}(\omega), \ldots, T(\omega), \omega) \text{ has ordinal pattern } \pi \}
  \end{equation*}
  is called {\it ordinal partition of order $d$} with respect to $T$.
\end{definition}
A finer partition is obtained by considering more than one successive ordinal pattern.
\begin{definition}
    We say, that a real vector $(x_0, x_1, ..., x_{d+n-1})\in {\mathbb R}^{d+n}$ has {\it$(n,d)$-word}
    $\pi_1 \pi_2 \ldots \pi_n$ if\\
    \begin {equation*}
	(x_i,x_{i+1},\ldots,x_{i+d}) \text{ has ordinal pattern } \pi_{i+1} \in \Pi_d \text{ for } i=0,1,\ldots,n-1.
    \end {equation*}	
\end{definition}
The partition ${\cal P}(d)_n$ associated to the collection of $(n,d)$-words consists of the sets
\begin{equation*}
   P_{\pi_1 \pi_2 \ldots \pi_n}=\{\omega \in P_{\pi_1}, T(\omega) \in P_{\pi_2}, \ldots, T^{\circ n-1}(\omega) \in P_{\pi_{n}}\}, \pi_1,\pi_2, \ldots, \pi_{n} \in \Pi_d.
\end{equation*}

Figure \ref{fig1} illustrates a segment $(\omega, T(\omega),\ldots,T^{\circ 5}(\omega))$ of some orbit (a)
and the corresponding $(5,1)$-, $(4,2)$-, $(3,3)$-, $(2,4)$- and $(1,5)$-words (b).

\begin{figure}[h]
  \begin{minipage}{0.49\linewidth}
    \includegraphics[scale=0.5]{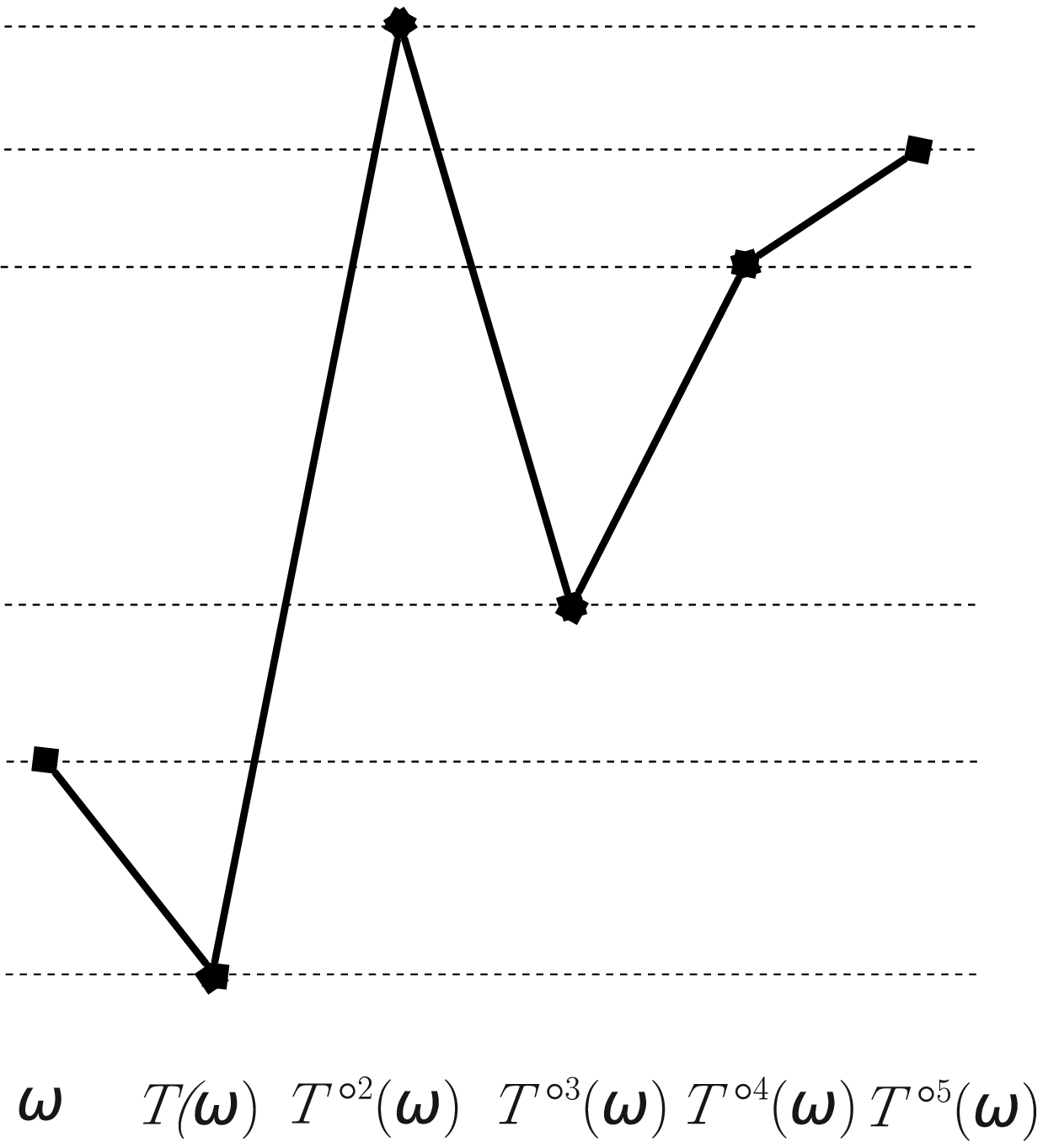}
    \begin{center}
     (a)
    \end{center}
  \end{minipage}
  \hfill
  \begin{minipage}{0.49\linewidth}
    \includegraphics[scale=0.5]{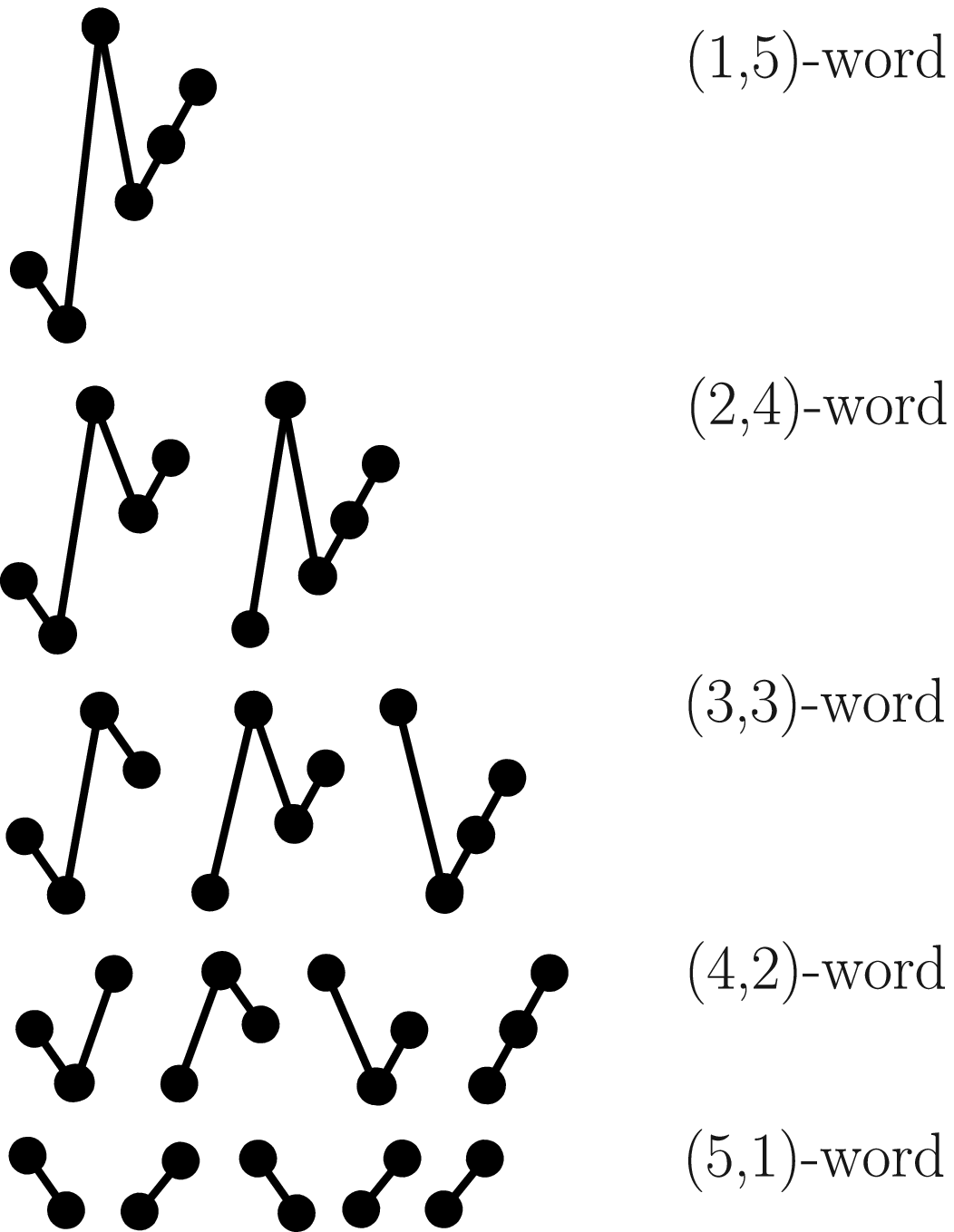}
    \begin{center}
     (b)
    \end{center}
 \end{minipage}
  \caption{Representation of the segment of the orbit (a) by $(n,d)$-words (b)}
  \label{fig1}
\end{figure}

Upon moving from $(1,5)$- to $(5,1)$-words one loses some information about the ordering of the iterates of $T$.
For example, the $(3,3)$-word determines the relation
\begin{equation*}
 \omega < T^{\circ 3}(\omega),
\end{equation*}
but in the $(4,2)$-word this relation is already lost.
It either holds $\omega \geq T^{\circ 3}(\omega)$ or $\omega < T^{\circ 3}(\omega)$.

On the other hand, one does not lose the relation
\begin{equation*}
 \omega < T^{\circ 4}(\omega)
\end{equation*}
when moving from the $(2,4)$-word to the $(3,3)$-word, although $\omega$ and $T^{\circ 4}(\omega)$ are in different patterns of the $(3,3)$-word.
The reason for this is the existence of the intermediate iterate $T^{\circ 3}(\omega)$ with
\begin{equation*}
      \omega < T^{\circ 3}(\omega)<T^{\circ 4}(\omega).
\end{equation*}

More generally, if there is some intermediate iterate $T^{\circ l}(\omega)$ with $\omega<T^{\circ l}(\omega)<T^{\circ d+1}(\omega)$ or $T^{\circ d+1}(\omega) \leq T^{\circ l}(\omega) \leq \omega$,
the relation between $\omega$ and $T^{\circ d+1}(\omega)$ is not lost upon moving from $(1,d+1)$- to $(2,d)$-words, and is lost otherwise.
Therefore, the set $V_{d+1}$ (see \eqref{BadSet}) consists of all $\omega$, for which the relation between $\omega$ and $T^{\circ d+1}(\omega)$ is lost
upon moving from $(1,d+1)$- to $(2,d)$-words. More precisely, the set $V_{d+1}$ is a union of the sets of the partition ${\cal P}(d+1)$ that
 are proper subsets of some sets of the partition ${\cal P}(d)_2$.

Figure \ref{fig2} illustrates $\omega,T^{\circ 2}(\omega) \in V_3$ for our example.

\begin{figure}[h]

  \begin{minipage}{0.49\linewidth}
    \includegraphics[scale=0.5]{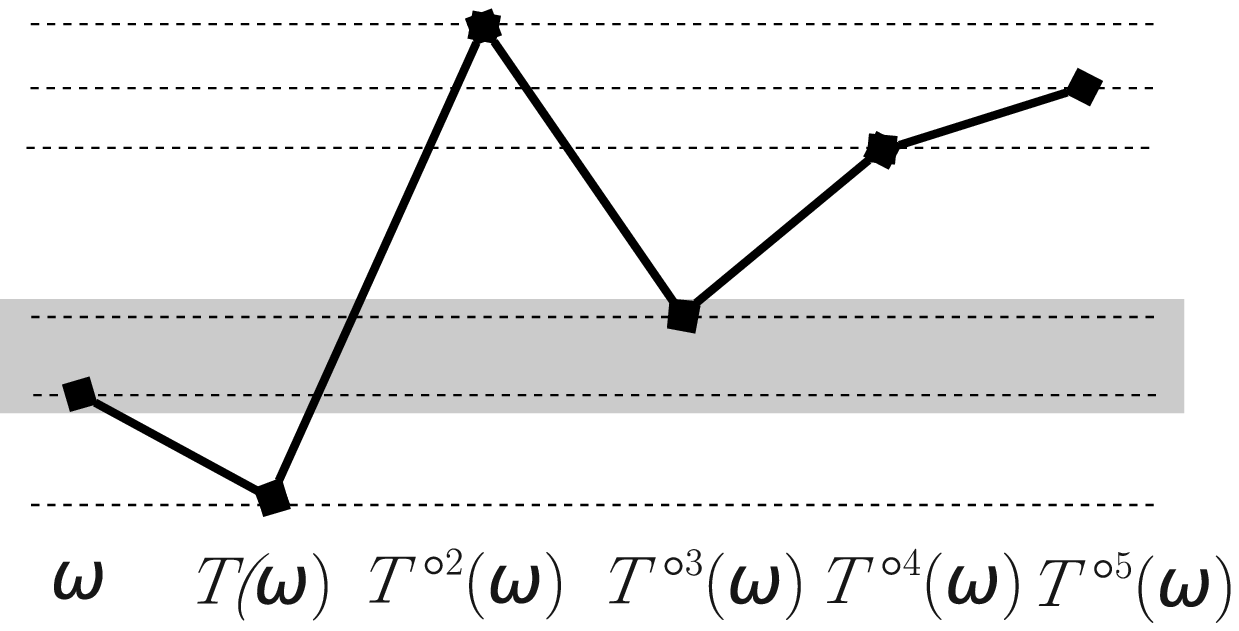}
    \begin{center}
     (a)
    \end{center}
  \end{minipage}
  \hfill
  \begin{minipage}{0.49\linewidth}
    \includegraphics[scale=0.5]{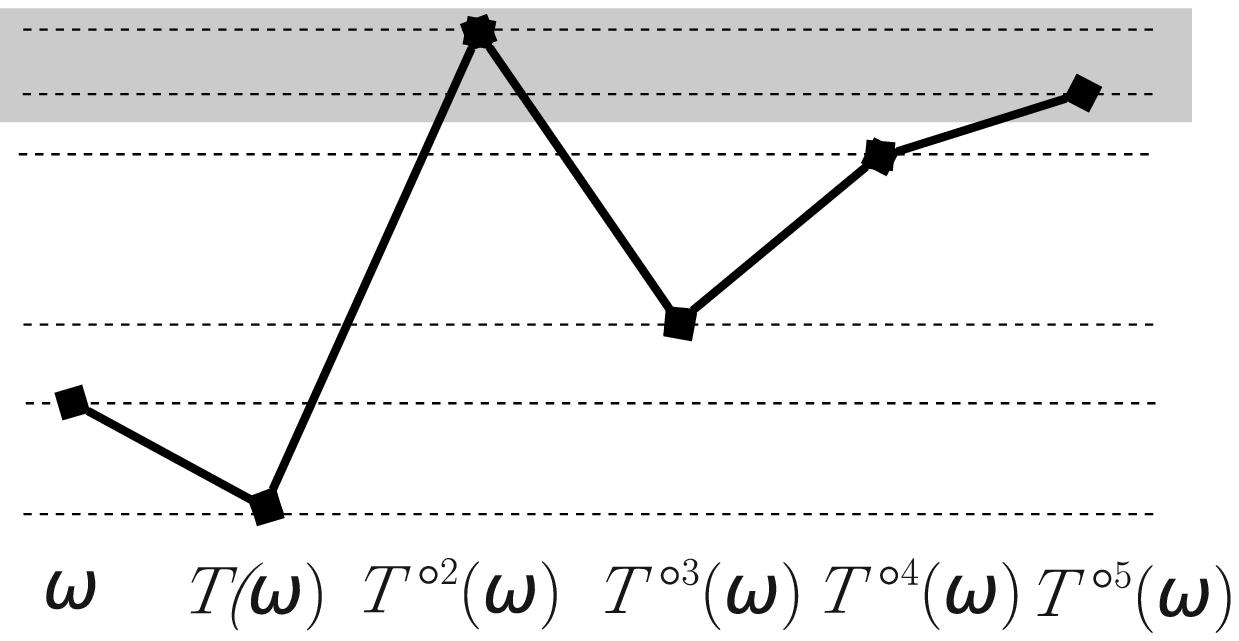}
    \begin{center}
     (b)
    \end{center}
 \end{minipage}
  \caption{$\omega\in V_3$ (a), $T^{\circ 2}(\omega) \in V_3$ (b)}
  \label{fig2}
\end{figure}

In the following section we compare the partitions ${\cal P}( d )_n$ and ${\cal P}( d + 1 )_{ n - 1 }$
by means of the set $V_{d+1}$.

\section{The partitions ${\cal P}( d + 1 )_{ n - 1 }$ and ${\cal P}( d )_n$}\label{Partitions1}

Upon moving from $(n-1,d+1)$- to $(n,d)$-words, for $i = 0,1,\ldots,n-2$ the relation
between $T^{\circ i}(\omega)$ and $T^{\circ d+i+1}(\omega)$ is lost iff $T^{\circ i}(\omega) \in V_{d+1}$.
Therefore, if $V_{d+1}\neq\emptyset$, then
the partition ${\cal P}( d + 1 )_{ n - 1 }$ is properly finer than the partition ${\cal P}( d )_n$. The following is valid:

\begin{proposition}\label{Proposition1}
  Given $P \in {\cal P}(d)_n$, let $k = \# \{l \in \{0, 1, \ldots, n-2\} \mid P \subset T^{- \circ l}(  V_{d+1} ) \}$.  
  Then there exist $2^k$ sets $P_1, P_2, \ldots, P_{2^k} \in {\cal P}(d+1)_{n-1}$ with
  \begin{equation*}
      P_1 \cup P_2 \cup \ldots \cup P_{2^k} = P.
  \end{equation*}
\end{proposition}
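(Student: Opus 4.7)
The plan is to identify each cell of ${\cal P}(d+1)_{n-1}$ lying inside $P$ with a choice of $n-1$ binary relations, one per index $l\in\{0,\ldots,n-2\}$, to show that $n-1-k$ of these relations are already forced by the $(n,d)$-word of $P$, and that the remaining $k$ are independent free choices.

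First I would compare the data of an $(n-1,d+1)$-word $\pi'_1\pi'_2\cdots\pi'_{n-1}$ with that of the $(n,d)$-word $\pi_1\pi_2\cdots\pi_n$ describing $P$. Each $\pi'_{l+1}$ is the ordinal pattern of $(T^{\circ l}(\omega),\ldots,T^{\circ l+d+1}(\omega))$: deleting its last coordinate recovers $\pi_{l+1}$ and deleting its first coordinate recovers $\pi_{l+2}$, so the $(n-1,d+1)$-word determines the $(n,d)$-word, and the cells of ${\cal P}(d+1)_{n-1}$ inside $P$ are parametrized by $(n-1,d+1)$-words whose induced $(n,d)$-word equals that of $P$. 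Conversely, given the pair $(\pi_{l+1},\pi_{l+2})$, the ordering of all $d+2$ iterates in the larger window -- that is, $\pi'_{l+1}$ itself -- is uniquely determined except for the relation between the two outer iterates $T^{\circ l}(\omega)$ and $T^{\circ l+d+1}(\omega)$, since every other pairwise comparison can be read off from $\pi_{l+1}$ or $\pi_{l+2}$.

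The key step is to decide when this outer relation is forced. From $\pi_{l+1}$ one reads the slot of $T^{\circ l}(\omega)$ among the middle iterates $T^{\circ l+1}(\omega),\ldots,T^{\circ l+d}(\omega)$, and from $\pi_{l+2}$ the slot of $T^{\circ l+d+1}(\omega)$ among the same middle iterates. If the two slots differ, some middle iterate lies strictly between $T^{\circ l}(\omega)$ and $T^{\circ l+d+1}(\omega)$, so their relative order is forced; if the two slots coincide, no middle iterate separates them, which is precisely the condition $T^{\circ l}(\omega)\in V_{d+1}$ described in Section \ref{FromOrdinalPatternsToWords}, and here $\pi'_{l+1}$ can take exactly one of two values, corresponding to the two sign possibilities for $T^{\circ l}(\omega)-T^{\circ l+d+1}(\omega)$. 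Since which of the two situations occurs depends only on $\pi_{l+1}$ and $\pi_{l+2}$, the membership $T^{\circ l}(\omega)\in V_{d+1}$ is constant on $P$, establishing the dichotomy $P\subset T^{-\circ l}(V_{d+1})$ or $P\cap T^{-\circ l}(V_{d+1})=\emptyset$ that is implicit in the definition of $k$.

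Combining the $n-1-k$ forced and $k$ free binary choices enumerates the $(n-1,d+1)$-words compatible with the $(n,d)$-word of $P$ by the $2^k$ possible tuples of free binary choices, giving $2^k$ cells $P_1,\ldots,P_{2^k}\in{\cal P}(d+1)_{n-1}$ whose union is $P$. I expect the main obstacle to be the clean verification that $T^{\circ l}(\omega)\in V_{d+1}$ depends only on the pair $(\pi_{l+1},\pi_{l+2})$ -- this requires handling the equality versus strict inequality conventions in the definition of $V_{d+1}$ in parallel with those in the definition of ordinal patterns -- after which the counting is purely combinatorial.
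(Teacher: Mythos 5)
Your argument is correct and follows essentially the same route as the paper's proof: the $(n,d)$-word of $P$ determines, for each $l\in\{0,\ldots,n-2\}$, whether the relation between $T^{\circ l}(\omega)$ and $T^{\circ l+d+1}(\omega)$ is forced by an intermediate iterate or left free, which gives both the dichotomy $P\subset T^{-\circ l}(V_{d+1})$ versus $P\cap T^{-\circ l}(V_{d+1})=\emptyset$ and a binary split of $P$ for each of the $k$ free indices, hence $2^k$ pieces. You merely spell out in more detail than the paper why an $(n-1,d+1)$-word is equivalent to the data of its induced $(n,d)$-word together with these $k$ outer relations.
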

\begin{proof}
  Consider some $P \in {\cal P}(d)_n$ and the corresponding $(n,d)$-word.
  Since the $(n,d)$-word determines the same dynamics for all $\omega \in P$,
  for $l=0,1,\ldots,n-2$ it holds either
  \begin{equation}\label{SubsetOfBadSet}
      P \subset T^{-\circ l}(V_{d+1})
  \end{equation}
  or
  \begin{equation}\label{NotSubsetOfBadSet}
      P \cap T^{-\circ l}(V_{d+1}) = \emptyset.
  \end{equation}
  For each $l$ with \eqref{SubsetOfBadSet} and all $\omega \in P$, either $T^{\circ l}(\omega) < T^{\circ d+l+1}(\omega)$ or $T^{\circ d+l+1}(\omega) \leq T^{\circ l}(\omega)$ providing a division of $P$ into two subset. We are done since there are exactly $k$ such divisions.
\end{proof}

Figure \ref{fig3} illustrates Proposition \ref{Proposition1}.
\begin{figure}[ht]
  \centering
  \includegraphics[scale=0.5]{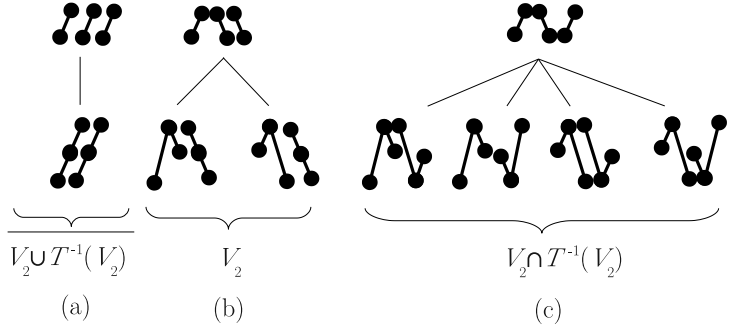}
  \caption{From $(3,1)$- to $(2,2)$-words. $\overline{V_2 \cup T^{-1}(V_2)}$ in (a) stands for the complement of $V_2 \cup T^{-1}(V_2)$}
  \label{fig3}
\end{figure}
For $\omega \notin V_2 \cup T^{-1}(V_2)$ the obtained $(3,1)$-word is not divided and contains the same information about the ordering as $2^0=1$ $(2,2)$-word (a),
for $\omega \in V_2$ the $(3,1)$-word is divided into $2^1=2$ $(2,2)$-words (b) and
for $\omega \in V_2 \cap T^{-1}(V_2)$ the $(3,1)$-word is divided into $2^2=4$ $(2,2)$-words (c).

Let $k(P)$ be determined as in Proposition \ref{Proposition1} for each $P \in {\cal P}(d)_n$.
Since for each $P$ it holds either \eqref{SubsetOfBadSet} or \eqref{NotSubsetOfBadSet}, it follows
\begin{equation}\label{ForFirstBound}
	\sum_{j=0}^{n-2}\mu(T^{-\circ j}(V_{d+1}))=\sum_{j=0}^{n-2}\sum_{P \in {\cal P}(d)_n}\mu(T^{-\circ j}(V_{d+1})\cap P)
	=\sum_{P \in {\cal P}(d)_n}k(P)\mu(P).
\end{equation}
Therefore, by Proposition \ref{Proposition1} and \eqref{ForFirstBound} one obtains an upper bound for $H({\cal P}(d+1)_{n-1})-H({\cal P}(d)_n)$ in the following way:
\begin{align}\label{FirstBound}
  \nonumber
  H({\cal P}(d+1)_{n-1}) - H({\cal P}(d)_{n}) &\leq \sum_{P \in {\cal P}(d)_n} \left( \mu(P) \ln \mu(P) - 2^{k(P)} \frac{\mu(P)}{2^{k(P)}} \ln\frac{\mu(P)}{2^{k(P)}} \right) \\
  = \ln 2 \sum_{P \in {\cal P}(d)_n} k(P) \mu(P) &= \ln 2 \sum_{j=0}^{n-2}\mu(T^{- \circ j}(V_{d+1})) = \ln 2 (n-1) \mu(V_{d+1}).
\end{align}
Inequality \eqref{FirstBound} provides the proof of Theorem \ref{Theorem2}.

\section{The partitions ${\cal P}( d )_n$ and ${\cal P}( d + n - 1 )$}\label{Partitions2}
Here we move from $(1,d+n-1)$-words (i.e. ordinal patterns of order $(d+n-1)$) to $(n,d)$-words.
At this point we cannot definitely say into how many $(n,d)$-words a $(1,d+n-1)$-word
is divided in dependence on the sets $V_{d+1},\ldots,V_{d+n-1}$.

Let us give an example. Figure \ref{fig4} illustrates a $(3,1)$-word with the same information as in the $(1,3)$-word (a),
other two $(3,1)$-words are divided into three and five $(1,3)$-words ((b) and (c), respectively).
\begin{figure}[ht]
  \centering
  \includegraphics[scale=0.5]{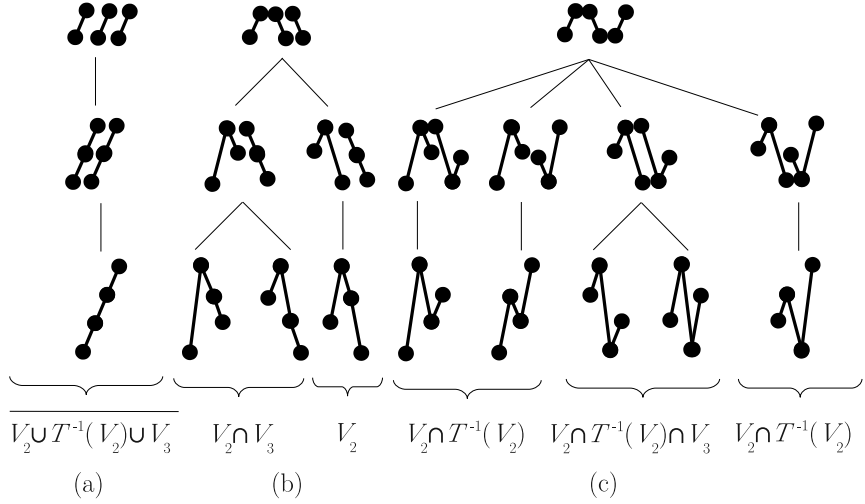}
  \caption{From $(3,1)$- to $(1,3)$-words. $\overline{V_2 \cup T^{-1}(V_2) \cup V_3}$ in (a) stands for the complement of $V_2 \cup T^{-1}(V_2) \cup V_3$}
  \label{fig4}
\end{figure}

One obtains an upper bound for $H({\cal P}(d+n-1)) - H({\cal P}(d)_n)$ by successive application of \eqref{FirstBound}:
\begin{align}\label{UpperBound}
    \nonumber
    H({\cal P}(d+n-1)) - H({\cal P}(d)_n) &= \sum_{i=1}^{n-1}( H({\cal P}(d+n-i)_i) - H({\cal P}(d+n-i-1)_{i+1}) )\\
    &\leq \ln 2 \sum_{i=1}^{n-1} i\,\mu(V_{d+n-i}) = \ln 2 \sum^{n-1}_{i=1}(n-i)\,\mu(V_{d+i}).
\end{align}
Comparing \eqref{MainInequality} and \eqref{UpperBound} it is natural to ask how fast the measure of the set $V_d$
decreases with increasing $d$. This question is the subject of current research.

\section{Proof of Theorem \ref{Theorem1}}\label{Proof}
In the following, we assume that $T$ is strong-mixing, however, some parts of the proof need only the weaker assumption of ergodicity, as we will indicate.
\begin{lemma}\label{Lemma1}
  Let $T$ be ergodic.
  Given an interval $A \subset \Omega$ and $d\in {\mathbb N}\setminus\{1\}$, let $\widetilde{V}_d=\widetilde{V}_d(A)$ be the set of points $\omega \in A$ for which at least one of two following conditions holds:
  \begin {equation} \label{LeftCondition}
     T^{\circ l}(\omega) \notin \{a\in A\,|\,a<\omega\} \text{ for all } l = 1,...,d-1,
  \end {equation}
  \begin {equation} \label{RightCondition}
     T^{\circ l}(\omega) \notin \{a\in A\,|\,a>\omega\} \text{ for all } l = 1,...,d-1.
  \end {equation}

  Then for all $\varepsilon > 0$ there exists some $d_\varepsilon \in {\mathbb N}$ such that $\mu(\widetilde{V}_d) < \varepsilon$ for all $d > d_\varepsilon$.
\end{lemma}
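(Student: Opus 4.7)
The plan is to reduce the lemma to showing that $\widetilde{V}_\infty:=\bigcap_{d\geq 2}\widetilde{V}_d$ is a $\mu$-null set, and then invoke continuity of measure from above (which applies since $\mu(\Omega)<\infty$). Since each of conditions \eqref{LeftCondition} and \eqref{RightCondition} becomes more restrictive as $d$ grows, I split $\widetilde{V}_d=\widetilde{V}_d^L\cup\widetilde{V}_d^R$ according to which condition is satisfied, so that both sequences $(\widetilde{V}_d^L)_d$ and $(\widetilde{V}_d^R)_d$ are decreasing in $d$. A short argument using the monotonicity of each individually gives $\widetilde{V}_\infty=\widetilde{V}_\infty^L\cup\widetilde{V}_\infty^R$, where $\widetilde{V}_\infty^L$ and $\widetilde{V}_\infty^R$ denote the corresponding intersections. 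So it suffices to prove $\mu(\widetilde{V}_\infty^L)=\mu(\widetilde{V}_\infty^R)=0$, and by the symmetry between the two conditions I concentrate on $\widetilde{V}_\infty^L$.

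To handle the $\omega$-dependence of the forbidden set $\{a\in A:a<\omega\}$, I will discretize. Assume $\mu(A)>0$ (else the claim is trivial). Fix $N\in\mathbb{N}$ and use atomlessness of $\mu$ to pick cut-points $\alpha_0<\alpha_1<\cdots<\alpha_N$ inside $A$ so that each sub-interval $A_j:=[\alpha_{j-1},\alpha_j]\cap A$ has measure $\mu(A)/N$. For any $i\geq 2$ and any $\omega\in A_i$ with $\omega>\alpha_{i-1}$ (the remaining boundary being a $\mu$-null set), the entire fixed interval $A_{i-1}$ lies in $\{a\in A:a<\omega\}$. Hence membership of such an $\omega$ in $\widetilde{V}_\infty^L$ forces the forward orbit of $\omega$ to miss $A_{i-1}$ for every $l\geq 1$. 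Ergodicity of $T$ together with $\mu(A_{i-1})>0$ gives $\mu\bigl(\bigcup_{l\geq 1}T^{-\circ l}(A_{i-1})\bigr)=1$, so $\mu(\widetilde{V}_\infty^L\cap A_i)=0$ for every $i\geq 2$. The only surviving piece is $\widetilde{V}_\infty^L\cap A_1$, whose measure is at most $\mu(A)/N$; letting $N\to\infty$ yields $\mu(\widetilde{V}_\infty^L)=0$.

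The main obstacle is exactly the $\omega$-dependent nature of the forbidden set: a direct application of ergodicity only delivers almost-sure recurrence into a fixed positive-measure target. The discretization step circumvents this by replacing the $\omega$-dependent set $\{a\in A:a<\omega\}$ with the fixed sub-interval $A_{i-1}$, uniformly for all $\omega\in A_i$, at the cost of a controllable error of size $\mu(A)/N$ coming from the leftmost slice $A_1$. Apart from this, the remaining ingredients (monotonicity of the defining conditions in $d$, atomlessness of $\mu$, and the standard ergodic recurrence statement for a fixed positive-measure set) are routine.
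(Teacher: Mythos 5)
Your proof is correct, and at its core it is the same strategy as the paper's: split $\widetilde{V}_d$ according to which of \eqref{LeftCondition}, \eqref{RightCondition} holds, discretize $A$ into ordered sub-intervals so that every $\omega$ outside one extreme slice has an entire \emph{fixed} positive-measure sub-interval on its forbidden side, and invoke ergodicity to force the orbit into that sub-interval. The differences are in the execution, and they are real enough to note. You pass to $d=\infty$ at the outset, using monotonicity of the conditions in $d$ and continuity of $\mu$ from above, which lets you quote the purely qualitative recurrence statement $\mu\bigl(\bigcup_{l\ge 1}T^{-\circ l}(B)\bigr)=1$ for ergodic $T$ and $\mu(B)>0$; the paper keeps $d$ finite, introduces the sets $D_{i,d}$, and shows $\mu(D_{i,d})\to 0$ via the Ces\`aro-average characterization of ergodicity, assembling the estimate uniformly in $d$. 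You also use a finite partition of $A$ into $N$ slices of equal measure, with the error $\mu(A)/N$ absorbed by the leftmost slice, whereas the paper uses a countable partition with $\mu(B_i)=\mu(A)/2^i$ and absorbs the error in the tail $\sum_{i>k}\mu(B_i)$; both devices serve the identical purpose of replacing the $\omega$-dependent forbidden set $\{a\in A\mid a<\omega\}$ by a fixed target. Your arrangement is arguably a little cleaner, since the interchange of limits is handled once by continuity from above rather than by a family of uniform estimates, at the cost of giving no information about the rate at which $\mu(\widetilde{V}_d)$ decays; the paper's bookkeeping with $D_{i,d}$ at least isolates where such quantitative information would have to come from.
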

\begin{proof}
  Let $\widetilde{V}^L_d$ be a set of points $\omega$ satisfying \eqref{LeftCondition}. Then it is sufficient to show $\mu(\widetilde{V}^L_d)<\frac{\varepsilon}{2}$ for the corresponding $d$ since
  for points satisfying \eqref{RightCondition} the proof is completely resembling.

  Consider a partition $\{B_i\}_{i=1}^\infty$ of $A$ into intervals $B_i$ with the following properties:
  \begin{enumerate}
	\item[(i)] $\mu(B_i) = \frac{\mu(A)}{2^i}$ for all $i \in {\mathbb N}$,
	\item[(ii)] for all $i < j$, and for all $\omega_1 \in B_i,\omega_2 \in B_j$ it holds $\omega_1 > \omega_2$.
  \end{enumerate}
  Since $\mu(\{\omega\})=0$ for all $\omega \in \Omega$, such partition always exists.

  Define $D_{i,d} = \{\omega \in B_i \mid T^{\circ l}(\omega) \notin \bigcup_{j=i}^{\infty}B_j \text{ for all } l = 1,...,d-1\}$.
  It holds
  \begin {equation}\label{ZeroMeasure}
     \bigcup_{l=1}^{d-1} \left( D_{i,d} \cap T^{- \circ l}(\bigcup_{j=i}^{\infty}B_j) \right) = \emptyset.
  \end {equation}
  For all $d \in {\mathbb N}$, \eqref{ZeroMeasure} provides $\widetilde{V}^L_d \subseteq \bigcup_{i=1}^{\infty} D_{i,d}$ and, since $D_{i,d} \subseteq B_i$, it holds
  \begin{align}\label{MeasureBound}
    \mu(\widetilde{V}^L_d) &\leq \mu(\bigcup_{i=1}^{\infty} D_{i,d}) = \sum_{i=1}^{\infty}\mu( D_{i,d}) \leq \sum_{i=1}^{k}\mu( D_{i,d}) + \sum_{i=k+1}^{\infty}\mu( B_i)  \nonumber \\
		       &\leq \sum_{i=1}^{k}\mu( D_{i,d}) + \sum_{i=k+1}^{\infty}\frac{\mu(A)}{2^i}  \leq \sum_{i=1}^{k}\mu( D_{i,d}) + \frac{\mu(A)}{2^k}
  \end{align}
  for all $k \in {\mathbb N}$. On the other hand, by the ergodicity of $T$ (compare \cite{Walters82}) and by \eqref{ZeroMeasure} we have
  \begin {equation*}
    \mu(\bigcap_{d=1}^{\infty}D_{i,d})\, \mu(\bigcup_{j=i}^{\infty}B_j) = \lim_{m \to \infty}\frac{1}{m} \sum_{l=1}^{m-1}  \mu\left(\bigcap_{d=1}^{\infty}D_{i,d} \cap T^{- \circ l}(\bigcup_{j=i}^{\infty}B_j)\right) = 0.
  \end {equation*}
  Therefore, $\mu(\bigcup_{j=i}^{\infty}B_j)>0$ implies $\mu(\bigcap_{d=1}^{\infty}D_{i,d}) = 0$ and, since $D_{i,1} \supseteq D_{i,2} \supseteq \ldots$, it holds
  \begin{equation*}
    \lim_{d \to \infty}\mu( D_{i,d}) = \mu(\bigcap_{d=1}^{\infty}D_{i,d}) = 0
  \end{equation*}
  for all $i\in {\mathbb N}$.

Now let $\varepsilon>0$. Fix some $k\in {\mathbb N}$ with $k>\log_2\frac{4}{\varepsilon}$ and $d_\varepsilon$ with $\mu( D_{i,d})<\frac{\varepsilon}{4k}$ for all $i=1,2,\ldots ,k$ and $d>d_\varepsilon$. Then, owing to \eqref{MeasureBound},
for $d > d_\varepsilon$ it holds
\begin {equation*}
\mu(\widetilde{V}^L_d) < k\,\frac{\varepsilon}{4k}+\frac{\varepsilon}{4}=\frac{\varepsilon}{2}
\end {equation*}
completing the proof.\hfill $\Box$
\end{proof}

Now we are coming to the proof of Theorem \ref{Theorem1}. Given $\varepsilon>0$, let $r>\frac{3}{\varepsilon}$ and let $\{A_i\}_{i=1}^r$ be a partition of $\Omega$ into intervals $A_i$ with $\mu(A_i)=\frac{1}{r}$. Furthermore, fix some $d_\varepsilon\in {\mathbb N}$ with
\begin{equation}\label{l1}
\mu( A_i \cap T^{-\circ d}(A_i) ) \leq \mu^2(A_i)+\frac{\varepsilon}{3r}=\frac{1}{r^2}+\frac{\varepsilon}{3r}
\end{equation}
and
\begin{equation}\label{l2}
\mu(\widetilde{V}_d)\leq\frac{\varepsilon}{3r}
\end{equation}
for all $i=1,2,\ldots ,r$ and all $d>d_\varepsilon$,
which is possible by the strong-mixing of $T$ and by Lemma \ref{Lemma1}, respectively.

For $\omega\in V_d\cap A_i$ it is impossible that both $T^{\circ d}(\omega)\not\in A_i$ and $\omega\not\in \widetilde{V}_d(A_i)$, implying
\begin{eqnarray*}
V_d = \bigcup_{i=1}^r\, (V_d\cap A_i)&\subset& \bigcup_{i=1}^r\, ((A_i\cap T^{-\circ d}(A_i))\cup\widetilde{V}_d(A_i))\\
&=& \bigcup_{i=1}^r\, (A_i\cap T^{-\circ d}(A_i))\cup\bigcup_{i=1}^r\, \widetilde{V}_d(A_i).
\end{eqnarray*}
From this, \eqref{l1}, and \eqref{l2}, one obtains
\begin{eqnarray*}
\mu(V_d)&\leq&\sum_{i=1}^r\, \mu(A_i\cap T^{-\circ d}(A_i))+\sum_{i=1}^r\, \mu(\widetilde{V}_d(A_i))\\
&\leq&r\left (\frac{1}{r^2}+\frac{\varepsilon}{3r}\right )+\frac{\varepsilon}{3}<\varepsilon.
\end{eqnarray*}

\begin{remark}
The technical assumption that $\mu (\{\omega\})=0$ for all $\omega\in\Omega$ is rather weak. In the ergodic (resp.~strong-mixing) case, $\mu (\{\omega\})>0$ would imply that $\omega$ is a periodic (resp.~fixed) point and that $\mu$ is concentrated on the orbit of $\omega$ (resp.~on $\omega$).
\end{remark}

This work was supported by the Graduate School for Computing in Medicine and Life Sciences
funded by Germany's Excellence Initiative [DFG GSC 235/1].


\begin{thebibliography}{99}

\bibitem{BandtPompe2002}
C.~Bandt, B.~Pompe,
Phys.~Rev.~Lett. \textbf{88}, 174102 (2002)

\bibitem{BandtKellerPompe2002}
C.~Bandt, G.~Keller, B.~Pompe,
Nonlinearity \textbf{15}, 1595 (2002)

\bibitem{KellerSinn2009}
K.~Keller and M.~Sinn,
Nonlinearity \textbf{22}, 2417 (2009)

\bibitem{KellerSinn2010}
K.~Keller and M.~Sinn,
Physica D \textbf{239}, 997 (2010)

\bibitem{Keller2011}
K.~Keller,
Discrete and Continuous Dynamical Systems A \textbf{32}, 891 (2011)

\bibitem{AmigoKennelKocarev2005}
J.M.~Amig\'{o}, M.B.~Kennel, L.~Kocarev,
Physica D \textbf{210}, 77 (2005)

\bibitem{Amigo2012}
J.M.~Amig\'{o},
Physica D \textbf{241}, 789 (2012)

\bibitem{Amigo2010}
J.M.~Amig\'{o},
\textit{Permutation Complexity in Dynamical Systems},
(Springer-Verlag, Berlin-Heidelberg 2010)

\bibitem{KellerUnakafovUnakafova2012}
K.~Keller, A.M.~Unakafov, V.A.~Unakafova,
Physica D \textbf{241}, 1477 (2012)

\bibitem{Walters82}
P.~Walters,
\textit{An Introduction to Ergodic Theory},
(Springer-Verlag, New York 2000)

\end{thebibliography}
\end{document}